\documentclass[lettersize,journal]{IEEEtran}
\usepackage{amsmath,amsfonts,amssymb,amsthm}
\usepackage{algorithmic}
\usepackage{array}
\usepackage[caption=false,font=normalsize,labelfont=sf,textfont=sf]{subfig}
\usepackage{textcomp}
\usepackage{stfloats}
\usepackage{url}
\usepackage{graphicx}
\usepackage{booktabs}
\usepackage{tabularx}
\usepackage{adjustbox}
\usepackage{verbatim}
\usepackage{cite}
\usepackage[hidelinks]{hyperref}

\def\BibTeX{{\rm B\kern-.05em{\sc i\kern-.025em b}\kern-.08em
		T\kern-.1667em\lower.7ex\hbox{E}\kern-.125emX}}
\graphicspath{{./}}
\newcommand{\TACQE}{\textsc{TACQE}}

\newtheorem{theorem}{Theorem}

\newcolumntype{Y}{>{\raggedright\arraybackslash}X}

\begin{document}
	
	\title{Topology-adaptive Estimation of Maximum Voltage Unbalance for Distribution Networks with Finite-sample Guarantee}
	\author{Jinbao Wang, Jun Liu, Shiliang Zhang \IEEEmembership{Member, IEEE}, Xuehui Ma, He Zhang\thanks{J. Wang, J. Liu, and X. Ma (jbwang@stu.xaut.edu.cn, liujun0310@xaut.edu.cn, xuehui@xaut.edu.cn) are with Xi'an University of Technology, China. S. Zhang (shilianz@ifi.uio.no) is with University of Oslo, Oslo, Norway. H. Zhang (hezhang@xpu.edu.cn) is with Xi’an Polytechnic University, China.} }
	\maketitle
	
		\begin{abstract}
		High penetration of single-phase distributed power production and frequent topology changes can significantly increase voltage unbalance in distribution networks. Reliable probabilistic power flow (PPF) assessment for the network, therefore, must account for topology-dependent power-flow equations, numerical error in the voltage unbalance factor (VUF), and finite-sample uncertainty in power measurements. In this paper, we propose Topology-Adaptive Certified Quantile Estimation (TACQE) to estimate the maximum VUF across monitored electric buses. Following a local network change, TACQE updates the cross-couplings among power-flow sequences, and certifies candidate PPF solutions against the updated nonlinear equations. The resulted power-flow state enclosures provide rigorous bounds on the maximum VUF. Furthermore, we derive independent reference-solver evaluations to calibrate discrepancies from the reference PPF solution. This leads to a uniform sampling bound that yields simultaneous upper-tail probabilities and quantiles, which provides a finite-sample guarantee for the estimated maximum-VUF distribution over an entire range of VUF thresholds. From our numerical simulations across 101 fixed-topology quantiles, TACQE achieves mean absolute VUF-quantile errors of $2.26\times10^{-5}$ and $5.06\times10^{-5}$ percentage points under nominal and stress conditions, respectively. In our topology-change experiments on three distributions networks, its quantile-function and upper-tail errors for the estimated VUF distribution are smaller than those of four baselines by factors of 8.27--12.1 and 7.76--10.4, respectively. With equal reference-solver budgets, TACQE produces a median probability-envelope width over the selected upper-tail range that is approximately 55\% smaller than that of reference Monte Carlo simulations. This indicates much tighter uncertainty bounds for high-VUF exceedance probabilities. Moreover, the local admittance-update routine achieves a median speedup of 12.3 under our approach, demonstrating that the designed TACQE can provide a fast candidate post-change power flow solution where it reuses most of the previous network model and modifies only the affected bus blocks.
	\end{abstract}
	
	\begin{IEEEkeywords}
		Voltage unbalance factor, high-quantile estimation, topology-varying distribution networks, finite-sample guarantees, probabilistic power flow.
	\end{IEEEkeywords}
	
	\section*{Nomenclature}
	{\footnotesize
		\begin{IEEEdescription}[
			\IEEEusemathlabelsep
			\IEEEsetlabelwidth{$\widehat L_{\mathrm{ref}},\widehat U_{\mathrm{ref}}$}
			]
			\item[$X=(\Theta,S)$] Random topology and complex power-injection scenario.
			\item[$M^\star,p_\tau^\star$] Reference maximum VUF over monitored buses and its exceedance probability.
			\item[$Q_M^\star(q)$] Population quantile of the reference maximum VUF.
			\item[$Y_\theta^{abc},\mathbf Y_\theta^{012}$] Phase-domain and fully coupled sequence-domain admittance matrices.
			\item[$\mathcal Y_\theta,Z_\theta$] Reduced admittance matrix and its inverse for topology $\theta$.
			\item[$F_\theta,\widehat x,B_\theta$] Exact power-flow residual, candidate state, and numerical preconditioner.
			\item[$E_\theta,\eta,K_\theta$] Verified inverse-defect bound, preconditioned-residual bound, and comparison matrix.
			\item[$\rho,\mathcal D_\rho$] Positive radius vector and candidate-centered complex polydisc.
			\item[$L^{\mathrm{det}},U^{\mathrm{det}}$] Deterministic lower and upper bounds on maximum VUF at the certified equilibrium.
			\item[$S(X),\ell_k$] Reference-containment expansion score and calibrated order-statistic margin.
			\item[$\alpha_{\mathrm{cal}},\beta$] Population miscoverage bound and calibration failure probability.
			\item[$L^{\mathrm{cal}},U^{\mathrm{cal}}$] Lower and upper endpoints after reference calibration.
			\item[$\widehat p_N^-,\widehat p_N^+$] Empirical lower- and upper-endpoint exceedance rates.
			\item[$\widehat L_{\mathrm{ref}},\widehat U_{\mathrm{ref}}$] Simultaneous lower and upper reference tail-probability bounds.
		\end{IEEEdescription}
	}
	
	\section{Introduction}
	\label{sec:introduction}
	
	Distributed energy resources, electric vehicles, and heat pumps often connect to individual phases of distribution networks. Their time-varying power injections increase phase imbalance, while switching operations alter how these imbalances propagate through the network. Consequently, voltage unbalance may increase substantially. Persistent voltage unbalance raises network losses, stresses power equipment, and reduces operating margins~\cite{Damianakis2025GridImpact,Tsobze2025VoltageImbalance}. The voltage unbalance factor (VUF) is defined as the ratio of negative- to positive-sequence voltage magnitudes. Its maximum across monitored buses characterizes the most severe unbalance in a given operating scenario. Since average operating conditions can mask infrequent high-VUF events, operational assessment requires reliable exceedance probabilities and high quantiles.
	
	Probabilistic VUF assessment typically combines repeated three-phase power-flow calculations, or their approximations, with empirical-distribution estimation or importance sampling~\cite{Gruosso2024AdvancedVUF,Nguyen2024FP3Phi,Christianen2024ImportanceSampling}. Existing studies often address power-flow computational cost, state-estimation accuracy, and tail-sampling efficiency separately. Under topology variation, however, these error sources interact and couple with each other. A reliable assessment must first represent the power-flow equations associated with the realized topology. It should also propagate state uncertainty through the VUF ratio and maximum operation across buses, while accounting for finite-sample uncertainty in the resulting distributional estimates.
	
	Topology changes modify the nodal admittance matrix and the associated phase and sequence couplings in the power-flow equations. Approaches with linear and convex approximations can reduce computational cost by simplifying topology-dependent relations~\cite{Lin2024LinDistS,Mahmoodi2024UnbalancedPF}, but their corrections may become inconsistent with the governing equations after a switching action. Graph-based models represent connectivity and line attributes~\cite{Gao2024PhysicsGCN,Ghamizi2025PowerFlowMultiNet}, whereas topology-transfer and dynamic-boundary approaches support operation across multiple network configurations~\cite{Yang2024TopologyTransfer,Su2024DynamicBoundary}. Although these methods improve structural representation or transferability, they do not by themselves certify that each predicted state satisfies the updated fully coupled power-flow equations. Moreover, further iterations using a nominal-topology matrix cannot eliminate the resulted model mismatch. Reliable VUF assessment therefore requires local network-element admittance updates followed by verification against the equations of the realized topology.
	
	Accurate voltages estimates under the correct topology do not, by themselves, ensure accurate VUF estimates. The negative-to-positive-sequence voltage ratio can amplify voltage error when the positive-sequence component is small, and small perturbations can change the bus at which the maximum VUF accurs. Improved linearizations and data-assisted corrections reduce average voltage error~\cite{Chung2024TaylorRobustPF,Liu2022DDAL3PF}. Fixed-point surrogates, Gaussian processes, and physics-guided graph neural networks can likewise improve prediction accuracy~\cite{Nguyen2024FP3Phi,Glover2025MultiphaseGP,Yang2024PGGNNPPF,Gao2024PhysicsGCN}. Nevertheless, average losses and physics-based penalty terms do not provide a scenario-specific bound on maximum VUF. Rare high-VUF states may have little influence on such average objectives. A usable guarantee therefore requires a state enclosure, propagation of that enclosure through the ratio and maximum operation, and calibration against the reference three-phase power-flow solution.
	
	Although scenario-specific state enclosures can bound the maximum VUF for individual operating conditions, they do not provide guarantees for the distribution of maximum VUF under random injections. Such distributional guarantees require accounting for two additional uncertainties: The possibility that a computed interval does not contain the specified reference-solver result, and the finite-sample error in estimating tail probabilities and quantiles. Predictive distributions and interval power-flow methods quantify uncertainty under a fitted model or prescribed input set~\cite{Glover2025MultiphaseGP,Gao2024BayesianPOPF,Liao2024ProbInterval,Pareek2025PVE}. However, they do not necessarily bound the population frequency with which their intervals fail to contain the output of a designated reference power-flow solver. Similarly, importance sampling, directed sampling, scenario reduction, and correlation-aware assessment improve the representation of rare or dependent operating states~\cite{Christianen2024ImportanceSampling,Martins2024KMedoids,Tan2025RareRisk,Niu2024VoltageRisk}. Yet if the underlying power-flow approximation has systematic error, improved sampling alone does not ensure accurate VUF threshold exceedance probabilities or high quantiles.
    
    To address these two sources of uncertainty, we in this paper combine nonparametric calibration with uniform finite-sample distributional bounds. Nonparametric order statistics can bound, at a prescribed confidence level, the population probability that a calibrated interval fails to contain a reference solver value~\cite{Wilks1941Tolerance}. The Dvoretzky--Kiefer--Wolfowitz (DKW) inequality uniformly bounds the difference between an empirical distribution function and its population counterpart over all thresholds~\cite{Dvoretzky1956DKW,Massart1990DKW}. Applied directly to approximate power-flow outputs, however, the DKW inequality does not account for their discrepancy from the reference target. We address this limitation through interval containment. That is, whenever a calibrated interval contains the reference value, exceedance of the lower interval endpoint implies exceedance of the upper endpoint. Combining the calibrated miscoverage bound with the uniform sampling bound yields a simultaneous envelope for reference upper-tail probabilities. Inverting this envelope yields simultaneous bounds on the corresponding VUF quantiles, including quantile levels selected after the envelop has been inspected.
	
	To this end, we develop Topology-Adaptive Certified Quantile Estimation (TACQE) for maximum VUF assessment under random power injections and topology variation. TACQE integrates topology-updated power-flow equations, numerical certification for individual scenarios, and reference-based statistical calibration to provide simultaneous guarantee for upper-tail probabilities and quantiles. The main contributions are as follows.
	\begin{itemize}
		\item We develop a low-rank admittance update method that preserves the diagonal sequence blocks and intersequence couplings induced by a local topology change. Certification against the updated nonlinear power-flow equations produces a state enclosure for each scenarios and a corresponding interval for the maximum VUF.
		\item We derive independent reference-solver evaluations to calibrate the residual discrepancy between each certified maximum-VUF interval and the reference-solver output. A prespecified monparametric order statistic determines an interval-expansion margin and a high-confidence bound on the population miscoverage probability.
		\item We combine calibrated interval-endpoint exceedances and a uniform DKW bound to construct a simultaneous envelope for reference upper-tail probability. Inverting this envelope yields simultaneous VUF quantile under the same confidence event.
	\end{itemize}

    The reminder of this paper is as follows. Section~\ref{sec:method1} briefs the problem where we estimate VUF under random injections and shifted topology. Section~\ref{sec:method2} constructs the post-change network model in phase coordinates and computes candidate power-flow states. Section~\ref{sec:method3} presents the numerical certification procedures, the resulted state enclosures, and the associated maximum-VUF intervals. Section~\ref{sec:method4} describes the reference-solver calibration and the construction of simultaneous tail-probability and quantile bounds. We conduct numerical simulations in Section~\ref{sec:experiment} and analyze the results in detail, and we conclude this work in Section~\ref{sec:conclusion}.
	
	\section{Topology-Adaptive Certified Quantile Estimation}
	\label{sec:method}  
	
	\subsection{Problem Formulation and Overall Framework}\label{sec:method1}
	
	Let $X=(\Theta,S)\sim P_X$ denote a scenario comprising a topology and complex power injections. A reference three-phase power-flow solver returns the sequence voltages at the monitored buses $\mathcal V_{\mathrm{mon}}$. We use $n_c$ reference evaluations for calibration and $N$ endpoint evaluations for distribution estimation. We define the reference maximum VUF and its exceedance probability by
	\begin{equation}
		M^\star(X)=\max_{i\in\mathcal V_{\mathrm{mon}}}100\frac{|V_{2,i}^\star(X)|}{|V_{1,i}^\star(X)|},\quad
		p_\tau^\star=\mathbb P_X(M^\star>\tau),
		\label{eq:target}
	\end{equation}
	The reference quantile is $Q_M^\star(q)=\inf\{\tau:p_\tau^\star\le1-q\}$. We seek lower and upper bounds on $p_\tau^\star$ that hold simultaneously for every threshold at a prescribed joint confidence level. Inverting these bounds must cover all $q\in(0,1)$ under the same event, including quantile levels selected after inspection.

	The target $M^\star$ must be finite and measurable. Thus, the reference solution must exist and positive-sequence voltage must be nonzero at every monitored bus. The admittance update further requires a voltage reference in each energized island and a square, nonsingular reduced matrix $Y_\theta$. Note that scenarios violating these assumptions are unresolved.
	
	TACQE constructs the nonlinear equations for each realized topology and converts a candidate state into a deterministic VUF interval. In the following sections we will describe how we treat different scenarios, how we conduct calibration to evaluate the interval and reference value on the same $n_c$ scenarios and fixes an expansion margin. We will also show we handle miscoverage and sampling terms to enlarge the empirical tail curves before quantile inversion, and illustrate how we separates the reference-calibration budget from the sample size used to resolve the tail.
	
	\subsection{Topology-dependent Update of the Fully Coupled Admittance Matrix}\label{sec:method2}
	
	This section assembles network elements in phase coordinates, where terminal connections and coupling are explicit, and transforms the complete admittance matrix to sequence coordinates. A local change is factored on its affected terminals and incorporated through a low-rank inverse update. Candidate states are then evaluated against the updated nonlinear equations.
	
	Let $\mathcal A(\theta)$ be the set of energized network elements under topology $\theta$. Matrix $A_k$ maps global phase voltages to the terminals of element $k$, whose primitive admittance matrix is $Y_k^{abc}$. The assembled phase-domain matrix and its sequence-domain transformation are
	\begin{equation}
		Y_\theta^{abc}=\sum_{k\in\mathcal A(\theta)}A_k^\top Y_k^{abc}A_k,\qquad
		\mathbf Y_\theta^{012}=\mathsf T^{-1}Y_\theta^{abc}\mathsf T.
		\label{eq:assembly}
	\end{equation}
	Here, $A_{012\to abc}$ is the per-bus sequence-to-phase transformation, $I_n$ is the identity for $n$ buses, and $T=A_{012\to abc}\otimes I_n$ uses sequence-major ordering. The sequence-domain matrix has the block structure
	\begin{equation}
		\mathbf Y_\theta^{012}=
		\begin{bmatrix}
			Y_{00,\theta}&Y_{01,\theta}&Y_{02,\theta}\\
			Y_{10,\theta}&Y_{11,\theta}&Y_{12,\theta}\\
			Y_{20,\theta}&Y_{21,\theta}&Y_{22,\theta}
		\end{bmatrix}.
		\label{eq:coupled_blocks}
	\end{equation}
	The off-diagonal blocks represent intersequence coupling arising from phase asymmetry in network elements. They vanish when the sequence networks decouple. Retaining them captures the effect of topology changes on negative-sequence voltage and VUF.
	
	We fix the positive-sequence reference voltage and eliminate its variable and equation to obtain the reduced matrix $Y_\theta$ and inverse $Z_\theta$. A local change is supported on the affected terminal rows and columns and is factored as $\Delta Y_\theta=U_\theta C_\theta W_\theta^H$. Here, $H$ denotes conjugate transpose, and subscript 0 denotes the nominal topology. When the required inverses exist, the Woodbury identity gives
	\begin{equation}
		Z_\theta=Z_0-Z_0U_\theta(C_\theta^{-1}+W_\theta^{\mathrm H}Z_0U_\theta)^{-1}W_\theta^{\mathrm H}Z_0.
		\label{eq:woodbury}
	\end{equation}
	The changed primitive admittances determine the low-rank factors and terminal support $\mathcal S$. Structural identifiers verify that this support contains every changed element, and an off-support residual check detects omissions. Factoring $\Delta Y_\theta[\mathcal S,\mathcal S]$ supplies the update coordinates. Equation (4) replaces a full factorization with a solve determined by the update rank, although forming or applying the global inverse still incurs network-size-dependent cost. Unchanged primitives are reused. All diagonal and coupling blocks are updated before certification.
	
	For scenario $(\theta,s)$, let $x=[V_0^\top,V_{1,\mathcal N_1}^\top,V_2^\top]^\top$, where $\mathcal N_1$ excludes the positive-sequence reference variable. Matrix $P$ inserts the unknown voltages into the complete sequence vector, and $R$ selects their current-balance equations. The updated power-flow equations are
	\begin{equation}
		\begin{aligned}
			F_\theta(x,s)&=\mathcal Y_\theta x+c_\theta-\mathcal I_\theta(x,s)=0,\\
			\mathcal I_\theta(x,s)&=R\mathsf T^{-1}\!\left[-\overline{s^{abc}/v^{abc}}\right],
		\end{aligned}
		\label{eq:exact}
	\end{equation}
	where $\bar v_{\mathrm{ref}}^{012}$ contains the prescribed reference-voltage contribution, $v^{abc}=T(Px+\bar v_{\mathrm{ref}}^{012})$, and $c_\theta$ is the corresponding known term in the reduced equations. Division and conjugation in the current expression are componentwise.
	
	\subsection{Scenario-Specific State and VUF Certification}\label{sec:method3}
	
	Given a candidate $\hat x$ and a numerical approximation $B_\theta$ to $Y_\theta^{-1}$, we bound the preconditioner defect, candidate residual, and nonlinear variation within a neighborhood. These bounds produce a state enclosure and, subsequently, an interval for maximum VUF. All quantities use the realized topology. This distinguishes numerical convergence from model consistency: A small residual under the nominal topology need not imply a small residual after switching.
	
	Let $\operatorname{fl}(\cdot)$ denote floating-point evaluation and $\operatorname{up}(\cdot)$ an outward-rounded componentwise upper bound. Let $u$ be unit roundoff and $n_x$ the dimension of $x$. Under the standard floating-point model, define $\gamma_r^{\mathrm R}=ru/(1-ru)$ and $\gamma_m^{\mathrm C}=\gamma_{8m+2}^{\mathrm R}$ for complex inner products of length $m$. With $\hat P_\theta=\operatorname{fl}(B_\theta Y_\theta)$ and $\hat R_\theta=\operatorname{fl}(I-\hat P_\theta)$, the inverse-defect bound is
	\begin{equation}
		\begin{split}
			E_\theta=\operatorname{up}\bigl[&(1+\gamma_1^{\mathrm R})|\widehat R_\theta|\\
			&+\gamma_{n_x}^{\mathrm C}|B_\theta||\mathcal Y_\theta|\bigr]
			\succeq|I-B_\theta\mathcal Y_\theta|.
		\end{split}
		\label{eq:inverse_defect}
	\end{equation}
	Let $\hat f_\theta=\operatorname{fl}(F_\theta(\hat x,s))$, and let $\varepsilon_F$ bound its componentwise error relative to the exact residual. A bound on the preconditioned residual is
	\begin{equation}
		\eta=\operatorname{up}\!\left[
		|\operatorname{fl}(B_\theta\widehat f_\theta)|
		+\gamma_{n_x}^{\mathrm C}|B_\theta||\widehat f_\theta|
		+|B_\theta|\varepsilon_F\right].
		\label{eq:residual_bound}
	\end{equation}
	By construction, $\eta\succeq|B_\theta F_\theta(\hat x,s)|$. Thus, $E_\theta$ bounds the inverse defect, while $\eta$ bounds the preconditioned residual. Certification depends on these verified bounds, not on the training loss or stopping criterion of the candidate solver.
	
	A positive radius vector $\rho$ defines the complex polydisc $D_\rho=\{x:|x-\hat x|\preceq\rho\}$. Absolute values and inequalities are componentwise, and $(a)_+=\max(a,0)$. Define $H_v=TP$ and $H_i=RT^{-1}$. Lower bounds on phase-voltage magnitudes and upper bounds on constant-power current variation are
	\begin{equation}
		\underline v_p(\rho)=\bigl(|\widehat v_p^{abc}|-[|H_v|\rho]_p\bigr)_+,\qquad
		d_p(\rho,s)=\frac{|s_p|}{\underline v_p(\rho)^2}.
		\label{eq:phase_slope}
	\end{equation}
	If $\underline v_p(\rho)>0$ at each nonzero-power terminal, define $G_\theta$ by~\eqref{eq:nonlinear_bound}. At a zero-power terminal, set $d_p=0$. For any $x,y\in D_\rho$, the constant-power current difference satisfies~\eqref{eq:current_lipschitz}. The transformation and terminal-selection matrices then yield the componentwise Lipschitz bound in~\eqref{eq:nonlinear_bound}.
	\begin{equation}
		G_\theta(\rho,s)=\operatorname{up}\!\left(|H_i|\operatorname{diag}(d_p)|H_v|\right)
		\label{eq:nonlinear_bound}
	\end{equation}
	componentwise bounds the growth of the constant-power current map over the box. For any $x,y\in\mathcal D_\rho$, the constant-power map satisfies
	\begin{equation}
		\begin{aligned}
			|I_p(x,s)-I_p(y,s)|
			&=|s_p|\frac{|V_p(x)-V_p(y)|}{|V_p(x)||V_p(y)|}\\
			&\leq d_p(\rho,s)|V_p(x)-V_p(y)|.
		\end{aligned}
		\label{eq:current_lipschitz}
	\end{equation}
	Combining nonlinear variation with the inverse defect gives the nonnegative comparison matrix
	\begin{equation}
		K_\theta(\rho,s)=E_\theta+|B_\theta|G_\theta(\rho,s).
		\label{eq:comparison}
	\end{equation}
	Certification requires positive phase-voltage lower bounds at nonzero-power terminals and positive positive-sequence lower bounds at monitored buses. It also requires the componentwise inequality
	\begin{equation}
		\eta+K_\theta(\rho,s)\rho\prec\rho.
		\label{eq:certificate}
	\end{equation}
	Condition~\eqref{eq:certificate} makes the bound on residual and local variation strictly smaller than the radius. Together with voltage positivity, it ensures that the preconditioned map preserves $D_\rho$ and contracts in a weighted infinity norm. The resulting certificate bounds the state error.
	
	These computable conditions establish a unique equilibrium and the componentwise state enclosure needed for VUF propagation.
	
	\begin{theorem}[A posteriori state certification]
    Suppose that $Y_\theta$ is nonsingular and that $E_\theta$, $\eta$, and $G_\theta$ are valid componentwise upper bounds as defined above. If a strictly positive vector $\rho$ satisfies~\eqref{eq:certificate} and both voltage-positivity conditions, then~\eqref{eq:exact} has a unique solution $x_\theta^{\mathrm{eq}}$ in $D_\rho$, with $|x_\theta^{\mathrm{eq}}-\hat x|\preceq\rho$.
	\end{theorem}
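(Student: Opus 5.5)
The proof is a Krawczyk-type argument. Define the Newton-like map
\[
\Phi(x)=x-B_\theta F_\theta(x,s)
\]
and show that $\Phi$ maps $D_\rho$ into itself and is a contraction there in the weighted norm $\|z\|_\rho=\max_j |z_j|/\rho_j$. Banach's fixed-point theorem on the closed (hence complete) polydisc then gives a unique fixed point $x_\theta^{\mathrm{eq}}\in D_\rho$. Fixed points of $\Phi$ in $D_\rho$ coincide with zeros of $F_\theta$ in $D_\rho$, provided $B_\theta$ is nonsingular. This follows because \eqref{eq:certificate} with $\eta\succeq0$ and $G_\theta\succeq0$ gives $E_\theta\rho\prec\rho$. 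Hence the spectral radius of $E_\theta$, and therefore of $I-B_\theta\mathcal Y_\theta$, is less than one, so $B_\theta\mathcal Y_\theta$ is invertible and so is $B_\theta$. The enclosure $|x_\theta^{\mathrm{eq}}-\hat x|\preceq\rho$ is then simply membership in $D_\rho$.

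The central estimate is a componentwise difference bound. For $x,y\in D_\rho$,
\[
\Phi(x)-\Phi(y)=(I-B_\theta\mathcal Y_\theta)(x-y)+B_\theta\bigl[\mathcal I_\theta(x,s)-\mathcal I_\theta(y,s)\bigr],
\]
because $c_\theta$ cancels. The first term is bounded by $E_\theta|x-y|$ using \eqref{eq:inverse_defect}. For the second term, write $\mathcal I_\theta=H_i\,[-\overline{s/v}]$ with $v=H_vx+\text{const}$. The voltage-positivity hypothesis ensures $\underline v_p(\rho)>0$ at nonzero-power terminals, and $|v_p(x)|\ge|\hat v_p|-[|H_v|\rho]_p$ on $D_\rho$ by the triangle inequality. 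Then \eqref{eq:current_lipschitz} gives
\[
|I_p(x)-I_p(y)|\le d_p\,[|H_v||x-y|]_p ,
\]
with zero-power terminals contributing nothing. Multiplying through by $|H_i|$ and $|B_\theta|$, and using that $\operatorname{up}$ only enlarges, yields
\[
|\Phi(x)-\Phi(y)|\preceq K_\theta(\rho,s)\,|x-y| .
\]

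Self-mapping and contraction both follow from this estimate. For self-mapping, take $y=\hat x$:
\[
|\Phi(x)-\hat x|\le|\Phi(\hat x)-\hat x|+K_\theta\rho\preceq\eta+K_\theta\rho\prec\rho ,
\]
using $|\Phi(\hat x)-\hat x|=|B_\theta F_\theta(\hat x,s)|\preceq\eta$. For contraction, \eqref{eq:certificate} gives $K_\theta\rho\preceq\rho-\eta$, and strictness yields $K_\theta\rho\preceq\kappa\rho$ with
\[
\kappa=\max_j\,(K_\theta\rho)_j/\rho_j<1 .
\]
Since $K_\theta$ is nonnegative, $|x-y|\preceq\|x-y\|_\rho\,\rho$ implies
\[
|\Phi(x)-\Phi(y)|\preceq\kappa\|x-y\|_\rho\,\rho ,
\]
that is, $\|\Phi(x)-\Phi(y)\|_\rho\le\kappa\|x-y\|_\rho$. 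Positive-sequence positivity at monitored buses is not needed for existence. It only guarantees that VUF is finite on $D_\rho$, and I would note this explicitly.

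The main obstacle I expect is rigor in the nonlinear Lipschitz step. The relation \eqref{eq:current_lipschitz} must hold uniformly on $D_\rho$ and not only at $\hat x$, and conjugation must be handled carefully, since $|\bar a-\bar b|=|a-b|$ so conjugation is harmless in modulus. I would also need to confirm that the reference-voltage offset cancels in $V_p(x)-V_p(y)$, and that the complex map is treated as a map on $\mathbb R^{2n_x}$, which is fine because Banach's theorem needs only completeness. A secondary concern is that the floating-point bounds \eqref{eq:inverse_defect} and \eqref{eq:residual_bound} are taken as valid by hypothesis, so no rounding analysis is required here.
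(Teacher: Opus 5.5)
Your proposal is correct and follows the paper's proof essentially step for step: the same map $x-B_\theta F_\theta(x,s)$, self-mapping of $D_\rho$ via $\eta+K_\theta\rho\prec\rho$, contraction in the weighted norm $\max_j|z_j|/\rho_j$ with factor $\max_j[K_\theta\rho]_j/\rho_j<1$, Banach's theorem, and nonsingularity of $B_\theta$ deduced from $E_\theta\rho\prec\rho$. You also write out the componentwise Lipschitz estimate $|\Phi(x)-\Phi(y)|\preceq K_\theta|x-y|$, which the paper leaves implicit, and your spectral-radius route to invertibility of $B_\theta\mathcal Y_\theta$ is equivalent to the paper's bound $\|I-B_\theta\mathcal Y_\theta\|_{\rho,\infty}<1$.
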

	
	\begin{proof}
    Let $\mathcal T_\theta(x)=x-B_\theta F_\theta(x,s)$. The verified bounds give $|\mathcal T_\theta(x)-\hat x|\preceq\eta+K_\theta\rho\prec\rho$, so $\mathcal T_\theta$ maps $D_\rho$ into itself. In the norm $\|z\|_{\rho,\infty}=\max_j|z_j|/\rho_j$, its contraction factor is at most $\max_j[K_\theta\rho]_j/\rho_j<1$. Banach's theorem gives a unique fixed point in $D_\rho$ and the stated enclosure. Moreover, $\|I-B_\theta Y_\theta\|_{\rho,\infty}<1$, so $B_\theta Y_\theta$, and hence $B_\theta$, is nonsingular. A fixed point therefore solves~\eqref{eq:exact}.
	\end{proof}

    A numerical floor $\varepsilon>0$ and inflation factor $\kappa>1$ initialize $\rho^{(0)}=\operatorname{up}[\kappa\max(\eta,\varepsilon)+\varepsilon]$. The search enlarges unsuccessful radii using the right-hand side of~\eqref{eq:certificate}. Diagonal scaling may propose tighter enclosures. Acceptance always requires~\eqref{eq:certificate} with a verified positive margin. Fixed-topology tests use one Newton step and one bounded radius search. The Newton step improves the candidate but does not certify it. Search exhaustion, failed voltage positivity, or numerical failure triggers a reference-solver fallback.

	For sequence component $m$ at monitored bus $i$, define $\underline v_{m,i}=(|\hat V_{m,i}|-\rho_{m,i})_+$ and $\overline v_{m,i}=|\hat V_{m,i}|+\rho_{m,i}$. If every positive-sequence lower bound is positive, the ratio and maximum are bounded by
	\begin{equation}
		L^{\mathrm{det}}=\max_i100\frac{\underline v_{2,i}}{\overline v_{1,i}},\qquad
		U^{\mathrm{det}}=\max_i100\frac{\overline v_{2,i}}{\underline v_{1,i}}.
		\label{eq:vuf_interval}
	\end{equation}
	Thus, $L^{\mathrm{det}}\le M^{\mathrm{eq}}\le U^{\mathrm{det}}$ for the certified equilibrium. We define Tier I where we use these endpoints, and we define Tier II to use the reference interval $[M^\star,M^\star]$ after certification fails. The point estimate is maximum VUF at the accepted center, or $M^\star$ after fallback. It is not the interval midpoint. Fallback solutions add to the reference budget. If the reference solver also fails, the scenario remains unresolved and cannot be silently discarded.
    
	\subsection{Reference Calibration and Tail-probability Bounds}\label{sec:method4}
	
	The deterministic interval encloses the equilibrium of~\eqref{eq:exact}, whereas $M^\star$ is the output of the reference solver. These targets need not coincide. Independent reference evaluations calibrate the discrepancy. Interval containment then relates endpoint and reference exceedances, while DKW controls evaluation-sample error. The two uncertainty sources remain explicit in the resulting population envelope.
	
	For Tier I, define the smallest symmetric expansion that makes the deterministic interval contain the reference value:
	\begin{equation}
		S(X)=\max\{L^{\mathrm{det}}-M^\star,M^\star-U^{\mathrm{det}},0\}.
		\label{eq:score}
	\end{equation}
	Tier II has score zero because its interval already contains $M^\star$. Calibration and evaluation scenarios are independent samples from $P_X$. The interval-producing procedure is fixed independently of the calibration outcomes. For a target miscoverage $\alpha_{\mathrm{cal}}$ and calibration failure probability $\beta$, choose a rank $k\in\{1,\ldots,n_c\}$ before inspecting the scores such that
	\begin{equation}
		\sum_{j=k}^{n_{\mathrm c}}\binom{n_{\mathrm c}}j(1-\alpha_{\mathrm{cal}})^j
		\alpha_{\mathrm{cal}}^{n_{\mathrm c}-j}\leq\beta,
		\label{eq:order}
	\end{equation}
	Set $\ell_k=S_{(k)}$, where $S_{(k)}$ is the $k$th smallest calibration score. The order-statistic bound ensures $P_X(S>\ell_k)\le\alpha_{\mathrm{cal}}$ with confidence at least $1-\beta$. Expand Tier I intervals by $\ell_k$, truncate their lower endpoints at zero, and leave Tier II intervals unchanged, as in~\eqref{eq:calibrated_interval}.
	\begin{equation}
		[L^{\mathrm{cal}},U^{\mathrm{cal}}]=
		\begin{cases}
			[(L^{\mathrm{det}}-\ell_k)_+,U^{\mathrm{det}}+\ell_k],&X\in\mathcal X_{\mathrm I},\\
			[M^\star,M^\star],&X\in\mathcal X_{\mathrm{II}}.
		\end{cases}
		\label{eq:calibrated_interval}
	\end{equation}
	To see why the bound holds, let $F_S$ be the score distribution. Failure means $F_S(S_{(k)})<1-\alpha_{\mathrm{cal}}$. At least $k$ scores must then lie in a lower set having probability at most $1-\alpha_{\mathrm{cal}}$. The binomial upper tail in (15) bounds this event, including distributions with atoms. Thus, with calibration confidence at least $1-\beta$, the probability that $M^\star$ falls outside (16) is at most $\alpha_{\mathrm{cal}}$.
	
	On the event that the calibrated interval contains $M^\star$, its endpoints satisfy the following exceedance inclusions for every threshold:
	\begin{equation}
		\{L^{\mathrm{cal}}>\tau\}\subseteq\{M^\star>\tau\}
		\subseteq\{U^{\mathrm{cal}}>\tau\}.
		\label{eq:event_inclusion}
	\end{equation}
	These inclusions hold on the coverage event, not unconditionally. Define the empirical endpoint exceedance rates $\hat p_N^-(\tau)=N^{-1}\sum_{n=1}^N\mathbf1\{L_n^{\mathrm{cal}}>\tau\}$ and $\hat p_N^+(\tau)=N^{-1}\sum_{n=1}^N\mathbf1\{U_n^{\mathrm{cal}}>\tau\}$, where $\mathbf1$ is the indicator function. The next result accounts for both interval miscoverage and evaluation-sample variation.
	
	\begin{theorem}[Uniform tail-probability envelope]
    Let $\epsilon_N(\delta)=\sqrt{\log(2/\delta)/(2N)}$. Suppose calibration and evaluation scenarios are independent and identically distributed under $P_X$, the interval procedure is fixed independently of calibration outcomes, and the prespecified rank satisfies~\eqref{eq:order}. Then~\eqref{eq:envelope} holds simultaneously for all $\tau\in\mathbb R$ with joint confidence at least $(1-\beta)(1-\delta)$.
		\begin{equation}
			\begin{aligned}
				\widehat L_{\mathrm{ref}}(\tau)
				&=\max\{0,\widehat p_N^--\epsilon_N-\alpha_{\mathrm{cal}}\},\\
				\widehat U_{\mathrm{ref}}(\tau)
				&=\min\{1,\widehat p_N^++\epsilon_N+\alpha_{\mathrm{cal}}\},\\
				\widehat L_{\mathrm{ref}}(\tau)&\leq p_\tau^\star\leq\widehat U_{\mathrm{ref}}(\tau)
			\end{aligned}
			\label{eq:envelope}
		\end{equation}
		holds simultaneously for all $\tau\in\mathbb R$.
	\end{theorem}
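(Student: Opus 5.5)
The plan is to split the randomness into two independent stages, calibration and then evaluation, and to define one good event for each. The calibration event is
\[
\mathcal E_{\mathrm{cal}}=\{P_X(S>\ell_k)\le\alpha_{\mathrm{cal}}\}.
\]
By the order-statistic argument following \eqref{eq:calibrated_interval}, it has probability at least $1-\beta$. Because the rank $k$ in \eqref{eq:order} is prespecified and the interval procedure does not depend on the calibration outcomes, $\ell_k$ is a measurable function of the calibration sample only. Conditional on that sample, the pairs $(L_n^{\mathrm{cal}},U_n^{\mathrm{cal}})$ are therefore i.i.d.\ draws of fixed functions of $X_n\sim P_X$. This conditioning step is what allows a standard DKW inequality to be applied to the evaluation sample.

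Next I will work conditionally on the calibration sample. I define the population endpoint exceedance functions $p^-(\tau)=P_X(L^{\mathrm{cal}}>\tau)$ and $p^+(\tau)=P_X(U^{\mathrm{cal}}>\tau)$. The empirical rates $\hat p_N^\pm(\tau)$ equal $1$ minus empirical CDFs, so the two-sided DKW inequality with Massart's constant gives
\[
\sup_\tau|\hat p_N^\pm(\tau)-p^\pm(\tau)|\le\epsilon_N(\delta)
\]
with probability at least $1-\delta$. The main obstacle is that we need this simultaneously for both endpoints under one $\delta$, not $\delta/2$ each.

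My first attempt is to note that a single DKW application controls one empirical distribution. For two of them I would apply DKW separately to each, at level $\delta/2$, via a union bound, and then check whether the stated $\epsilon_N(\delta)$ survives. It does not, since the union bound would give $\log(4/\delta)$. I will therefore look for a one-sided argument. Only $\hat p_N^-\le p^-+\epsilon$ and $\hat p_N^+\ge p^+-\epsilon$ are needed, and these are one-sided DKW events, each with failure probability at most $e^{-2N\epsilon^2}=\delta/2$ by Massart. A union bound over the two one-sided events then gives total failure at most $\delta$ with exactly $\epsilon_N(\delta)=\sqrt{\log(2/\delta)/(2N)}$. Call this event $\mathcal E_{\mathrm{DKW}}$. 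Its conditional probability is at least $1-\delta$ given any calibration sample, so
\[
P(\mathcal E_{\mathrm{cal}}\cap\mathcal E_{\mathrm{DKW}})\ge(1-\beta)(1-\delta).
\]

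On the joint event, I will convert endpoint probabilities into reference probabilities using \eqref{eq:event_inclusion}, which holds on the coverage event $C=\{L^{\mathrm{cal}}\le M^\star\le U^{\mathrm{cal}}\}$. Here $C^c=\{S>\ell_k\}$ for Tier I, and $C^c$ is empty for Tier II because its score is zero, so $P_X(C^c)\le\alpha_{\mathrm{cal}}$; truncation at zero can only enlarge the Tier I interval, since $M^\star\ge0$. For every $\tau$ I then get
\[
p^-(\tau)\le P_X(\{L^{\mathrm{cal}}>\tau\}\cap C)+\alpha_{\mathrm{cal}}\le p_\tau^\star+\alpha_{\mathrm{cal}},
\]
and similarly $p_\tau^\star\le p^+(\tau)+\alpha_{\mathrm{cal}}$. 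Combining these with the one-sided DKW bounds gives
\[
\hat p_N^--\epsilon_N-\alpha_{\mathrm{cal}}\le p_\tau^\star\le\hat p_N^++\epsilon_N+\alpha_{\mathrm{cal}}.
\]
Clipping to $[0,1]$ is harmless because $p_\tau^\star\in[0,1]$. Since no step depended on $\tau$ beyond the uniform DKW supremum, the bound holds for all $\tau$ simultaneously, which is \eqref{eq:envelope}.
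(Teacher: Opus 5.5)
Your proof is correct and gives the same radius $\epsilon_N(\delta)$, but it handles the two endpoint samples differently from the paper.

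\textbf{The paper's route.} It projects $M^\star$ onto $[L^{\mathrm{cal}},U^{\mathrm{cal}}]$ to get $\widetilde M$. Because $\widetilde M\neq M^\star$ only on the miscoverage set, the population tail of $\widetilde M$ is within $\alpha_{\mathrm{cal}}$ of $p_\tau^\star$ uniformly in $\tau$. Because $L^{\mathrm{cal}}\le\widetilde M\le U^{\mathrm{cal}}$ pointwise, the (unobserved) empirical tail of $\widetilde M$ lies between $\widehat p_N^-$ and $\widehat p_N^+$. A single two-sided DKW bound on the sample $\widetilde M_1,\ldots,\widetilde M_N$ then gives the envelope.

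\textbf{Your route.} You apply DKW directly to the two observed endpoint samples, keeping only the two one-sided deviations actually needed, each at level $\delta/2$. You then convert endpoint tails into the reference tail by intersecting with the coverage event $C$ rather than by a coupling.

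\textbf{What each buys.} Your version avoids the auxiliary variable. It also shows directly that the population endpoint tails $p^\pm$ are uniformly close to their empirical counterparts. The price is that you rely on Massart's one-sided inequality with constant $1$, which Massart guarantees only when $e^{-2N\epsilon^2}\le 1/2$. That condition holds here, since $e^{-2N\epsilon_N^2}=\delta/2$ and $\delta\le 1$, but you should state it. The paper's single two-sided application needs no such restriction. You should also note that the lower-endpoint deviation you need, $\widehat p_N^- - p^-$, is the $\sup(F-\widehat F_N)$ direction, which follows from the other by the usual reflection $X\mapsto -X$.

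\textbf{Wording slip.} Truncating the Tier I lower endpoint at zero raises it, so it shrinks rather than enlarges the interval. It is harmless precisely because $M^\star\ge 0$, which is what your argument actually uses.
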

	
	\begin{proof}
		On calibration success, the miscoverage set has probability at most $\alpha_{\mathrm{cal}}$. Project $M^\star$ onto $[L^{\mathrm{cal}},U^{\mathrm{cal}}]$ to obtain $\widetilde M$. The projection changes $M^\star$ only on this set, so their population tails differ by at most $\alpha_{\mathrm{cal}}$ uniformly in $\tau$. The pointwise ordering $L^{\mathrm{cal}}\le\widetilde M\le U^{\mathrm{cal}}$ also orders their empirical tails. Conditional on calibration, one two-sided DKW bound applies to the independent sample of $\widetilde M$, with probability at least $1-\delta$. Combining the ordering, DKW error, and miscoverage bound gives (18). Multiplying the calibration probability by the conditional evaluation probability gives the stated joint confidence.
	\end{proof}
	
	Invert the envelopes as in~\eqref{eq:quantile}, using $\inf\varnothing=+\infty$ and $M^\star\ge0$.
	\begin{equation}
		\begin{aligned}
			\widehat Q_{\mathrm{ref}}^-(q)&=\inf\{\tau\geq0\mid\widehat L_{\mathrm{ref}}(\tau)\leq1-q\},\\
			\widehat Q_{\mathrm{ref}}^+(q)&=\inf\{\tau\geq0\mid\widehat U_{\mathrm{ref}}(\tau)\leq1-q\}.
		\end{aligned}
		\label{eq:quantile}
	\end{equation}
	The upper-envelope threshold set is contained in the reference set, which is contained in the lower-envelope set. Taking infima therefore gives $\hat Q_{\mathrm{ref}}^-(q)\le Q_M^\star(q)\le\hat Q_{\mathrm{ref}}^+(q)$ simultaneously for all $q\in(0,1)$. This remains valid after selecting a quantile level from the displayed envelope. Simultaneous validity does not ensure a finite upper bound: if $1-q<\alpha_{\mathrm{cal}}+\epsilon_N$, the upper bound in~\eqref{eq:quantile} is infinite.
    
    Note that coverage is marginal under $P_X$. It does not automatically hold conditional on each topology or after a distribution shift. Also note that the budgets have distinct roles: $n_c$ controls calibration uncertainty, while $N$ controls empirical tail resolution.

	\section{Simulation Results and Discussions}
	\label{sec:experiment}

	\subsection{Experimental Design and Metrics}
	In the simulations, we evaluate the IEEE European Low-Voltage (EuLV) feeder~\cite{IEEEPES2015EuLV}, the CIGRE low-voltage benchmark~\cite{CIGRE2014Benchmark}, and the Kerber rural network~\cite{Kerber2011LV}. Under nominal EuLV conditions, the load multiplier and PV penetration follow uniform distributions $U[0.5,1.5]$ and $U[0,0.8]$, respectively. The probability of concentrated connection to phase $a$ is 0.5. Stress conditions use $U[0.5,2.5]$, $U[0,1]$, and 0.7. These changes increase voltage-drop variability, reverse flow, and negative-sequence excitation.

	The baselines are reference Monte Carlo (Reference MC)~\cite{Gruosso2024AdvancedVUF}, principal-component-analysis support-vector regression (PCA-SVR)~\cite{Jolliffe2002PCA,Drucker1997SVR}, PCA Gaussian-process regression (PCA-GP)~\cite{Jolliffe2002PCA,Rasmussen2006GPML,Glover2025MultiphaseGP}, data-driven-aided linear three-phase power flow (DDA-L3PF)~\cite{Liu2022DDAL3PF}, and a fixed-point three-phase surrogate (FP-$3\phi$)~\cite{Nguyen2024FP3Phi}. Each trained baseline uses 800 reference training evaluations. Calibrated methods use a separate 1000-scenario reference set. Reference MC uses 1000 reference solutions to form its empirical distribution. TACQE requires no training and evaluates 9000 endpoint scenarios. Separate benchmarks measure fallback demand and local-update cost.
	
	The primary metric is mean absolute error (MAE) over the evaluated quantile function. Fixed-topology tests use the 101 levels in (20), with $E_{\mathrm{all}}$ defined by (21).
	\begin{equation}
		\mathcal Q_{101}=\{0.01,0.02,\ldots,0.99\}\cup\{0.025,0.975\},
	\end{equation}
	and define
	\begin{equation}
		E_{\mathrm{all}}=\frac{1}{|\mathcal Q|}\sum_{q\in\mathcal Q}|\widehat Q_M(q)-\widehat Q_{M,\mathrm{eval}}^\star(q)|.
		\label{eq:exp_allq}
	\end{equation}
	Here, $\hat Q_{M,\mathrm{eval}}^\star$ is the empirical reference quantile on the same evaluation scenarios. For methods using all evaluation scenarios, this comparison measures computational distortion rather than error relative to the unknown population quantile $Q_M^\star$. Reference MC uses only 1000 reference solutions and also incurs sampling error relative to the larger evaluation set. Population uncertainty is treated separately by~\eqref{eq:quantile}. Define $\mathcal Q_L=\{0.01,0.025,0.05,0.10\}$ and $\mathcal Q_U=\{0.90,0.95,0.975,0.99\}$, and compute $E_L$ and $E_U$ using~\eqref{eq:exp_allq}. Baseline-to-TACQE error ratios $A_L$ and $A_U$ exceed one when TACQE has smaller error. VUF values are in percentages, and absolute differences are measured by percentage points (p.p.).
	
	Topology experiments use five independent variants of each network, each with one to four sequential changes and 1000 test scenarios per realized topology. We evaluate 11 levels: $\mathcal Q_L$, $\{0.25,0.5,0.75\}$, and $\mathcal Q_U$. The normalized error $I_{m,k}=E_{m,k}/E_{m,0}$ compares method $m$ at change level $k$ with its own unedited-topology error. A separate mechanism diagnostic retains 62,875 records for which both the reference solver and the coupled sequence model converge, including unedited cases. Its enclosure coverage is evaluated only among accepted records.
	
	\subsection{Quantile Accuracy under a Fixed Topology}

	The fixed-topology tests isolate state-solution and VUF-propagation errors. The simulation results show that TACQE has the lowest MAE over the 101 evaluated quantiles in both regimes (Fig.~\ref{fig:full_quantile} and Table~\ref{tab:quantile_mae}): $2.26\times10^{-5}$ p.p. under nominal conditions and $5.06\times10^{-5}$ p.p. under stress. These errors are 12.9\% and 1.30\% below FP-$3\phi$, respectively, and approximately 553--$7.33\times10^3$ times smaller than those of the other trained baselines.
	
	\begin{figure}[!t]
		\centering
		\includegraphics[width=.98\columnwidth]{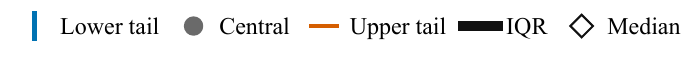}\\[-1mm]
		\includegraphics[width=.485\columnwidth]{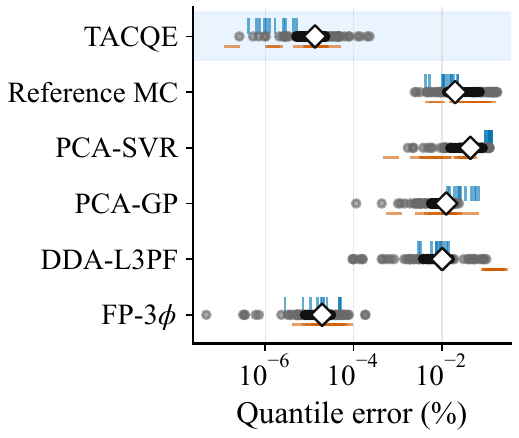}\hfill
		\includegraphics[width=.485\columnwidth]{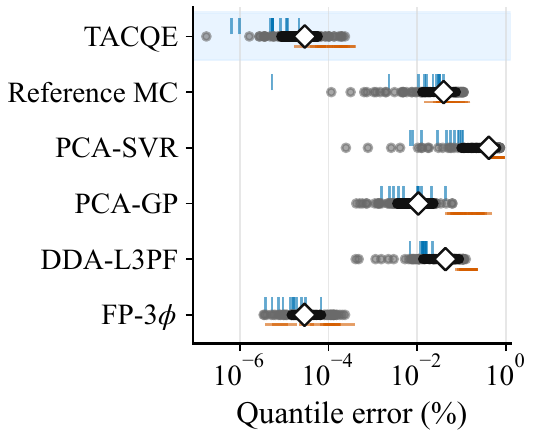}
		\caption{Absolute errors at 101 fixed-topology quantiles under (a) nominal and (b) stress conditions. Points identify quantile groups, diamonds mark medians, and black segments show interquartile ranges (IQRs). Errors are in percentage points.}
		\label{fig:full_quantile}
	\end{figure}
	
	\begin{table}[!t]
		\centering
		\caption{Mean Absolute Error over the Full Quantile Function (\%)}
		\label{tab:quantile_mae}
		\scriptsize
		\setlength{\tabcolsep}{3.2pt}
		\begin{tabular}{lrrr}
			\toprule
			Method & Nominal & Stress & Mean over topology changes\\
			\midrule
			\TACQE{} & $\mathbf{2.26\times10^{-5}}$ & $\mathbf{5.06\times10^{-5}}$ & $\mathbf{1.25\times10^{-2}}$\\
			Reference MC & $4.73\times10^{-2}$ & $4.43\times10^{-2}$ & --\\
			PCA-SVR & $5.29\times10^{-2}$ & $3.71\times10^{-1}$ & $1.51\times10^{-1}$\\
			PCA-GP & $1.35\times10^{-2}$ & $2.80\times10^{-2}$ & $1.43\times10^{-1}$\\
			DDA-L3PF & $3.01\times10^{-2}$ & $5.51\times10^{-2}$ & $1.04\times10^{-1}$\\
			FP-$3\phi$ & $2.60\times10^{-5}$ & $5.13\times10^{-5}$ & $1.25\times10^{-1}$\\
			\bottomrule
		\end{tabular}
	\end{table}
	
	FP-$3\phi$ is the closest point-estimation baseline, with both methods reaching the $10^{-5}$-p.p. error scale. TACQE achieves comparable accuracy without training and also returns scenario-specific intervals. Reference MC estimates the distribution from 1000 reference solutions. TACQE uses 1000 solutions for calibration and 9000 candidate evaluations for distribution estimation. 
	
	Tail performance is less uniform (Fig.\ref{fig:tail_advantage}(a)). Under nominal conditions, TACQE's lower- and upper-tail errors are 0.0800 and 0.395 times those of FP-$3\phi$. Under stress, the lower-tail ratio is 0.434, but the upper-tail errors are $1.36\times10^{-4}$ p.p. for TACQE and $7.10\times10^{-5}$ p.p. for FP-$3\phi$. Thus, lower overall MAE does not establish superiority at every tail quantile.
    
    The VUF ratio and maximum over buses can amplify different state errors and reorder nearby tail samples. TACQE's center estimate is not optimized for individual order statistics. Its certificate instead bounds sequence magnitudes before applying the ratio and maximum, and calibration addresses discrepancies from the reference result.
	
	\begin{figure}[!t]
		\centering
		\includegraphics[width=.485\columnwidth]{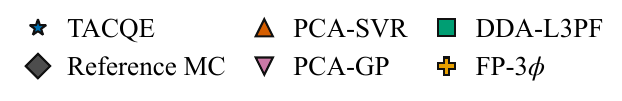}\hfill
		\includegraphics[width=.385\columnwidth]{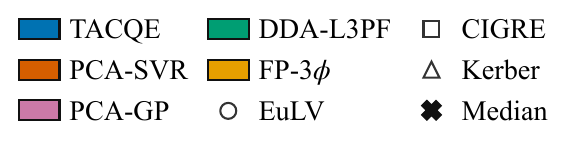}\\[-1mm]
		\includegraphics[width=.485\columnwidth]{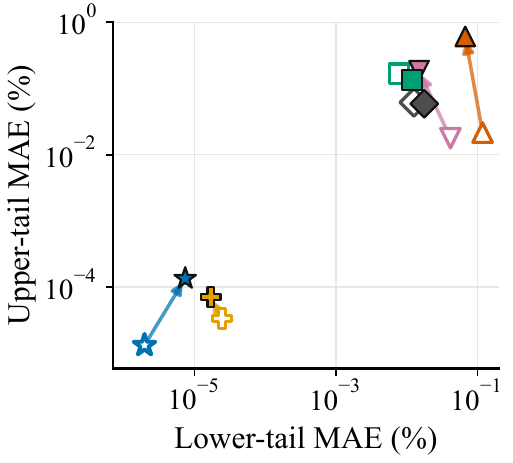}\hfill
		\includegraphics[width=.485\columnwidth]{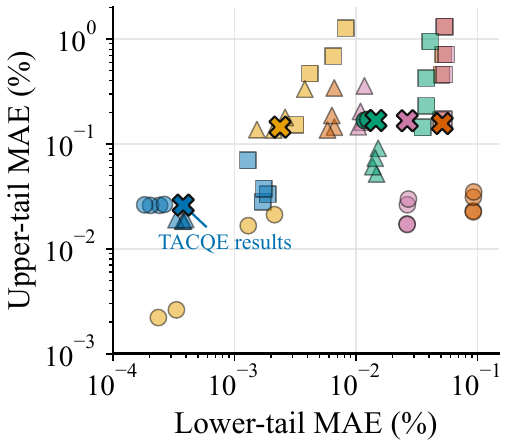}
		\caption{Lower- and upper-tail MAE under (a) fixed topology and (b) topology changes. Arrows in (a) connect nominal and stress results. Crosses in (b) mark medians over 12 network/change combinations. Both axes are logarithmic; errors are in percentage points.}
		\label{fig:tail_advantage}
	\end{figure}
	
	\subsection{Quantile Accuracy under Topology Changes}

    Topology changes separate the methods more clearly (Table I). Across three networks and four change levels, TACQE's mean $E_{\mathrm{all}}$ is 0.0125 p.p., compared with 0.151, 0.143, 0.104, and 0.125 p.p. for PCA-SVR, PCA-GP, DDA-L3PF, and FP-$3\phi$, respectively. The reduction factors from the simulation results are 8.27--12.1. Mean lower-tail errors of the baselines are 4.02--67.1 times larger, and mean upper-tail errors are 7.76--10.4 times larger (Fig.~\ref{fig:tail_advantage}(b)).
	
	In the simulations, only TACQE updates the full realized admittance matrix before certification. PCA-SVR and PCA-GP retain nominal-topology coordinates. DDA-L3PF updates its linear approximation but retains the learned correction. FP-$3\phi$ retains the nominal admittance matrix. Calibration can expand intervals around biased outputs, but it does not repair the point ordering used for empirical quantiles. Some EuLV upper-tail cases still favor FP-$3\phi$ or PCA-GP. On CIGRE and Kerber, TACQE's quantile-function errors are 15.1 and 10.5 times smaller than FP-$3\phi$.

    A small solver residual does not remove structural mismatch. At CIGRE change level three, the fixed-positive-sequence diagnostic gives a median nominal-model residual of $8.80\times10^{-14}$ p.u., but a median topology-induced map change of $9.95\times10^{-3}$ p.u. Further iteration on the nominal equations cannot eliminate this difference. TACQE updates the equations first, then uses~\eqref{eq:vuf_interval} to bound propagation of the remaining state error into maximum VUF.

	Admittance updating and recalibration address different errors. Updating corrects the equations after switching, and recalibration bounds the remaining discrepancy from the reference solver under a specified distribution. Note that recalibration alone does not repair point ordering, and updating alone does not preserve statistical coverage after the injection distribution changes.
	
	\subsection{Coverage and Width of the Tail-probability Envelope}

    For each fixed-topology regime, we use 1000 reference-calibration scenarios and 9000 held-out evaluation scenarios to assess interval violations and envelope width. Expansion scores measure reference discrepancy, while a decomposition over 260 thresholds identifies the width contributions. Equation~\eqref{eq:order} provides the population miscoverage bound, and held-out violation counts provide an implementation check.

    Under stress conditions, the held-out 99th-percentile expansion score is $2.72\times10^{-4}$ p.p. for TACQE, compared with 2.01, 0.460, and 0.369 p.p. for calibrated PCA-SVR, PCA-GP, and DDA-L3PF (Fig.~\ref{fig:mechanism}(a)). FP-$3\phi$ reaches $3.15\times10^{-4}$ p.p. but requires 800 additional training solutions. Larger calibration scores generally require larger endpoint corrections. Probability-envelope width also depends on how many endpoints cross each threshold, so it need not scale proportionally with the expansion margin. TACQE's median widths are 0.0386 nominally and 0.0387 under stress.
	
	\begin{figure}[!t]
		\centering
		\includegraphics[width=.485\columnwidth]{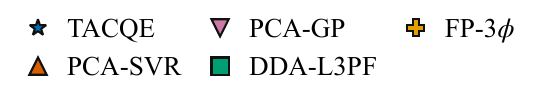}\hfill
		\includegraphics[width=.485\columnwidth]{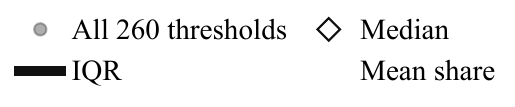}\\[-1mm]
		\includegraphics[width=.485\columnwidth]{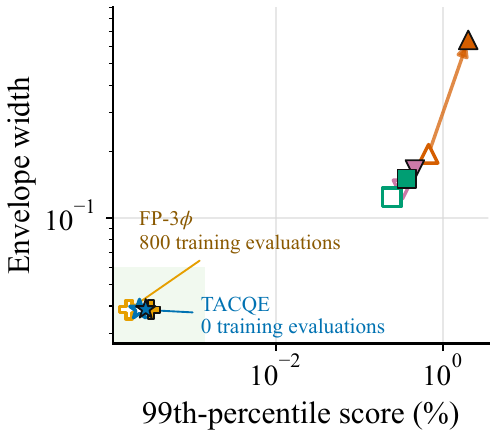}\hfill
		\includegraphics[width=.485\columnwidth]{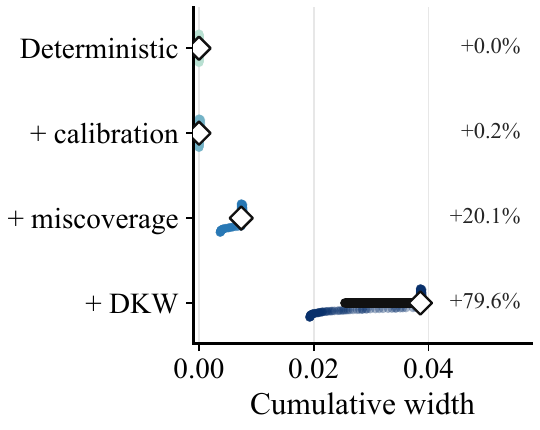}
		\caption{Reference discrepancy and probability-envelope width. (a) Held-out 99th-percentile expansion score versus median width under nominal and stress conditions. (b) Successive width contributions at 260 thresholds. Scores are in percentage points; widths are probabilities.}
		\label{fig:mechanism}
	\end{figure}

    Statistical terms account for almost all envelope width (Fig.~\ref{fig:mechanism}(b)). The deterministic interval and reference-calibration expansion contribute mean incremental shares of 0.0179\% and 0.213\%, respectively, totaling approximately 0.231\%. The miscoverage and DKW terms contribute 20.1\% and 79.6\%. Unlike endpoint corrections, these statistical terms expand both sides uniformly before clipping. Finite-sample uncertainty therefore limits resolution in these tests.

    With $k=n_c=1000$, $\beta=\delta=0.025$, and $N=9000$, the miscoverage bound is $\alpha_{\mathrm{cal}}=0.00368$ and the DKW radius is $\epsilon_N=0.0156$. The joint confidence is at least $0.975^2=0.950625$. Before clipping, these terms contribute the two-sided width in~\eqref{eq:width_substitution}.

	\begin{equation}
		\begin{aligned}
			2(\alpha_{\mathrm{cal}}+\epsilon_N)
			&=2\left[1-0.025^{1/1000}+\sqrt{\frac{\log80}{18000}}\right]\\
			&\simeq0.0386.
		\end{aligned}
		\label{eq:width_substitution}
	\end{equation}
	The resulting width, approximately 0.0386, matches the nominal median and is close to the stress median of 0.0387. The expansion margin nearly doubles, from $3.21\times10^{-4}$ to $6.39\times10^{-4}$ p.p., while median probability width changes by only 0.0001. Endpoint crossings, rather than margin size alone, determine the probability-width increment. Increasing $N$ reduces DKW error as $N^{-1/2}$; increasing $n_c$ permits a smaller miscoverage bound. For these budgets, $\alpha_{\mathrm{cal}}+\epsilon_N\approx0.0193$, so (19) gives an infinite upper bound at $q=0.99$. Accurate empirical 99th-percentile estimates therefore remain distinct from informative population bounds at that level.

	\begin{figure}[!t]
		\centering
		\includegraphics[width=.82\columnwidth]{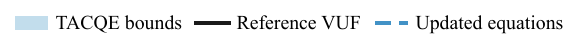}\\[-1mm]
		\includegraphics[width=.485\columnwidth]{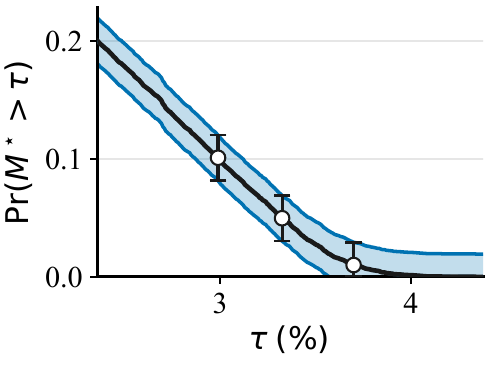}\hfill
		\includegraphics[width=.485\columnwidth]{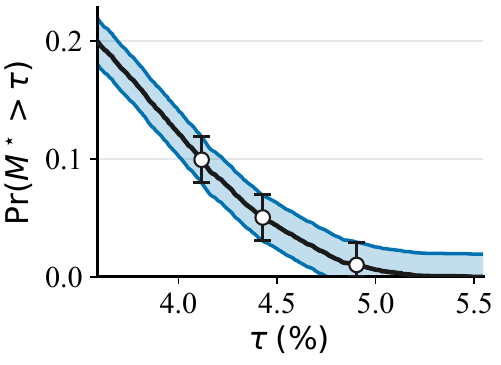}
		\caption{Reference-VUF probability envelopes under (a) nominal and (b) stress conditions. The black curve is the empirical reference tail from 9000 scenarios; blue boundaries are TACQE bounds. The dashed curve is the empirical tail computed from the updated equations.}
		\label{fig:tail_containment}
	\end{figure}
	
	\begin{table}[!t]
		\centering
		\caption{Held-Out Reference Coverage and Envelope Width}
		\label{tab:envelope_validity}
		\scriptsize
		\setlength{\tabcolsep}{2.8pt}
		\begin{tabular}{lrrrrc}
			\toprule
			Distribution & $\ell$/\% & Violations & Rate/\% & $W_{50}$ & $C_{\mathrm{all}}$\\
			\midrule
			Nominal & $3.21\times10^{-4}$ & $6/9000$ & 0.0667 & 0.0386 & Yes\\
			Stress & $6.39\times10^{-4}$ & $2/9000$ & 0.0222 & 0.0387 & Yes\\
			\bottomrule
		\end{tabular}
	\end{table}

    Held-out interval violations are 6/9000 nominally and 2/9000 under stress (Table~\ref{tab:envelope_validity}), corresponding to 0.0667\% and 0.0222\%. The reported upper 95\% Clopper--Pearson limits are 0.145\% and 0.0803\%, both below the calibration bound of 0.368\%. Both envelopes contain the empirical reference tail at all 260 inspected thresholds. Define $C_{\mathrm{all}}$ as the indicator of this containment. Because endpoints and reference values use the same evaluation scenarios, $C_{\mathrm{all}}=1$ checks implementation consistency, and the population guarantee comes from Theorem 2.

	At equal reference budgets, the methods allocate computation differently. Reference MC forms an empirical distribution from 1000 reference solutions. TACQE uses 1000 reference solutions for calibration and 9000 certified endpoint scenarios for tail estimation. In the fixed-topology tests, no fallback occurs, and TACQE's median envelope width is approximately 55\% smaller over thresholds with empirical reference exceedance probabilities from 0.01 to 0.20.

	This reduction depends on successful certification. Every Tier II fallback adds a reference solution and reduces the budget advantage. Note that it does not change the statistical argument, which already includes fallback intervals.
	
	\subsection{Effects of Admittance-matrix Updating on Certification}
	\label{sec:topology_mechanisms}

    A paired diagnostic isolates the admittance update. Both calculations use the same fixed positive-sequence voltage and 12 iterations of the coupled zero- and negative-sequence model, and only the nominal or updated admittance factors differ. Results coincide before switching. After changes, the largest VUF MAE decreases from 0.314 to 0.0164 p.p. on CIGRE and from 0.130 to 0.00700 p.p. on Kerber (Fig.~\ref{fig:operator_recomposition} and Table~\ref{tab:topology_mechanism}). Ratios of these maxima are 19.2 and 18.6, compared with 2.28 on EuLV. Remaining error includes approximation from holding positive-sequence voltage fixed. This diagnostic isolates the update contribution without identifying a single cause of differences among networks.
	
	\begin{figure}[!t]
		\centering
		\includegraphics[width=.92\columnwidth]{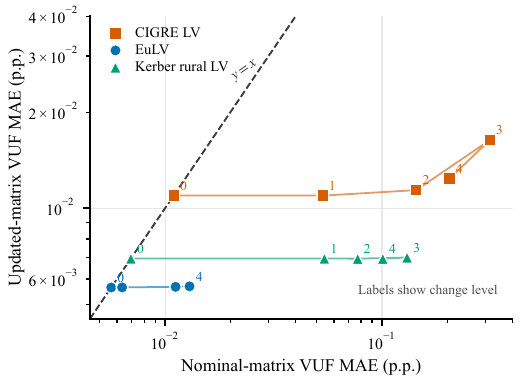}
		\caption{Paired admittance-update diagnostic with positive-sequence voltage fixed and equal iteration budgets. Each point represents one network and change level. Points below $y=x$ have smaller error with updated factors. Errors are in percentage points.}
		\label{fig:operator_recomposition}
	\end{figure}

	\begin{table}[!t]
		\centering
		\caption{Candidate-Transfer Acceptance and VUF Error after Topology Changes}
		\label{tab:topology_mechanism}
		\scriptsize
		\setlength{\tabcolsep}{2.2pt}
		\begin{tabular}{lrrrrr}
			\toprule
			Network & Min. $f_{\mathrm{tr}}$/\% & Max. $\gamma$ & Max. $E_{\mathrm{nom}}$/\% & Max. $E_{\mathrm{upd}}$/\% & Error reduction\\
			\midrule
			CIGRE LV & 43.1 & 2.41 & 0.314 & 0.0164 & 19.2\\
			EuLV & 100 & 0.669 & 0.0130 & 0.00569 & 2.28\\
			Kerber LV & 99.7 & 0.661 & 0.130 & 0.00700 & 18.6\\
			\bottomrule
		\end{tabular}
	\end{table}

	The candidate-transfer diagnostic checks whether a nominal-topology candidate can be enclosed under the changed coupled model. Let $f_{\mathrm{tr}}$ be the accepted fraction of retained records, and let $\gamma_\theta$ bound the fixed-point contraction factor. EuLV and Kerber have $\gamma_{\max}<0.670$ and rejection rates no greater than 0.300\% (Fig. 6). CIGRE exceeds $\gamma_{\max}=1$ after the first change. At change level three, 2775 of 4875 records are rejected: 1683 fail contraction and 1092 pass contraction but fail the voltage-margin check. These groups account for 34.5\% and 22.4\% of retained records, totaling 56.9\%. At level four, the maximum contraction bound decreases from 2.41 to 1.79 and rejection falls to 50.2\%. Acceptance therefore need not decline monotonically with the number of changes.

	\begin{figure}[!t]
		\centering
		\includegraphics[width=.92\columnwidth]{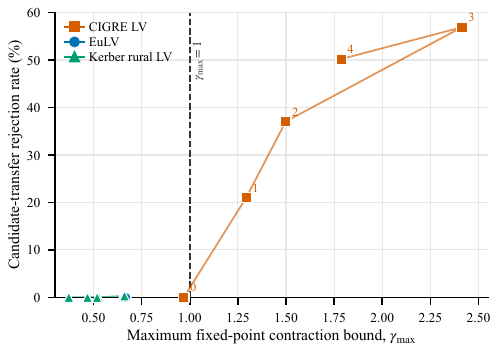}
		\caption{Candidate-transfer diagnostic with positive-sequence voltage fixed. Rejection is $100(1-f_{\mathrm{tr}})$ and is distinct from the Tier II fallback rate. The dashed line marks $\gamma_{\max}=1$.}
		\label{fig:network_validity_burden}
	\end{figure}

    All 15 rejected Kerber records pass contraction, illustrating the separate voltage-margin requirement. Let $r_0$ be the nominal-map residual and $d_\theta$ the change in the map output at the same candidate. For $\gamma_\theta<1$, the required transfer radius is $R_\theta=(r_0+d_\theta)/(1-\gamma_\theta)$. The median-radius rejected case has $r_0=1.37\times10^{-16}$ p.u., $d_\theta=0.0388$ p.u., and $\gamma_\theta=0.448$. The amplification factor is approximately 1.81, giving~\eqref{eq:migration_substitution}.

	\begin{equation}
		\begin{aligned}
			R_\theta&\simeq1.81(0.0388)\simeq0.0702\ \mathrm{p.u.},\\
			2R_\theta&\simeq0.140>0.132
			\simeq\min_p|\widehat v_p^{abc}|-0.800.
		\end{aligned}
		\label{eq:migration_substitution}
	\end{equation}
	The factor of two bounds the combined zero- and negative-sequence contribution to phase-voltage variation. The required radius exceeds the margin above the diagnostic floor of 0.800 p.u., leaving a recorded margin of $-8.77\times10^{-3}$ p.u. Thus, nominal-model convergence and a contraction bound below one do not suffice for transfer acceptance. Across all networks, 54,672 of 62,875 retained records pass the transfer check (87.0\%). Every accepted enclosure contains the changed equilibrium of the coupled model with positive-sequence voltage fixed.

    Across the 12 network/change combinations, TACQE's normalized quantile error has median 1.01, mean 1.15, and maximum 2.24 (Fig.~\ref{fig:stability}(b)). Baseline medians are 1.48 for DDA-L3PF, 2.88 for PCA-GP, and 3.57 for PCA-SVR. FP-$3\phi$ has a ratio of order $10^3$ because its unedited error is nearly zero. Table~\ref{tab:quantile_mae} supplies the absolute-error context. Under stress, TACQE's error increases by a factor of 2.24, compared with 7.02 for PCA-SVR (Fig.~\ref{fig:stability}(a)). Updating addresses topology-induced equation mismatch, while stress-specific recalibration addresses the changed operating distribution.

	\begin{figure}[!t]
		\centering
		\includegraphics[width=.485\columnwidth]{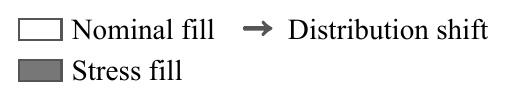}\hfill
		\includegraphics[width=.485\columnwidth]{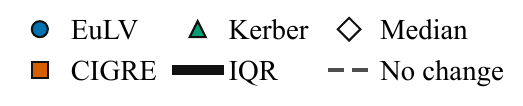}\\[-1mm]
		\includegraphics[width=.485\columnwidth]{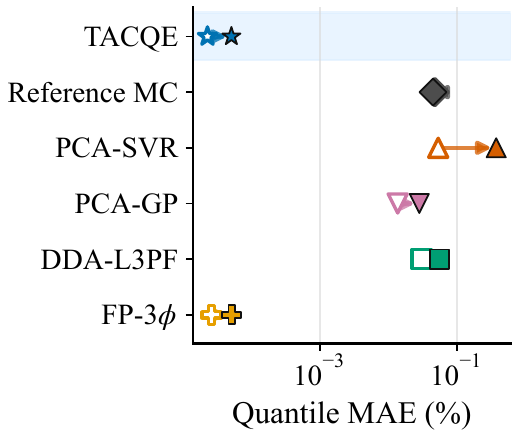}\hfill
		\includegraphics[width=.485\columnwidth]{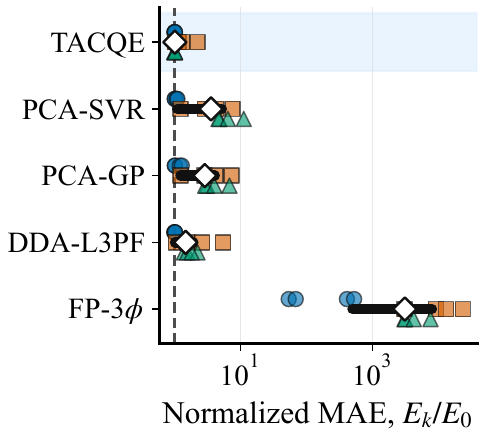}
		\caption{Quantile-error sensitivity to distribution and topology changes. (a) $E_{\mathrm{all}}$ under nominal and stress conditions. (b) Error after one to four topology changes, normalized by each method's unedited-topology error on the same network.}
		\label{fig:stability}
	\end{figure}
	
	\subsection{Reference-solver Evaluations and Local Update Cost}
	
	We assess reference-solver demand separately from local-update runtime. Reference counts include training, calibration, and deployment fallback, but exclude audit-only solutions. The runtime benchmark compares matrix and state consistency with full reconstruction across 36 topology variants.

    In the fixed-topology runs, TACQE uses 1000 calibration solutions and no fallback, compared with 1800 training and calibration solutions for each trained baseline (Fig.~\ref{fig:cost}(a)). This saves 44.4\% of reference evaluations while approaching FP-$3\phi$ point accuracy. Deployment requires $n_c+N_{\mathrm{II}}$ reference solutions, where $N_{\mathrm{II}}$ is the fallback count. With the present budgets, TACQE uses fewer solutions than a trained baseline only if $N_{\mathrm{II}}<800$, or less than 8.89\% of the 9000 evaluation scenarios. Note that the 87.0\% transfer-acceptance rate cannot be used to infer $N_{\mathrm{II}}$. That diagnostic fixes positive-sequence voltage, whereas full certification evaluates all unknown components under~\eqref{eq:exact}.
	
	\begin{figure}[!t]
		\centering
		\includegraphics[width=.485\columnwidth]{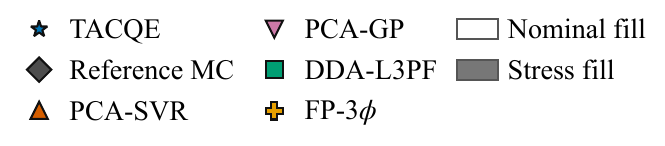}\hfill
		\includegraphics[width=.485\columnwidth]{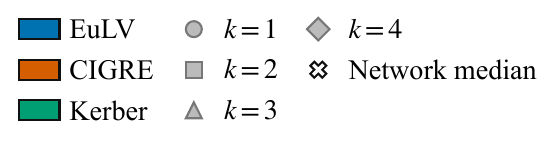}\\[-1mm]
		\includegraphics[width=.485\columnwidth]{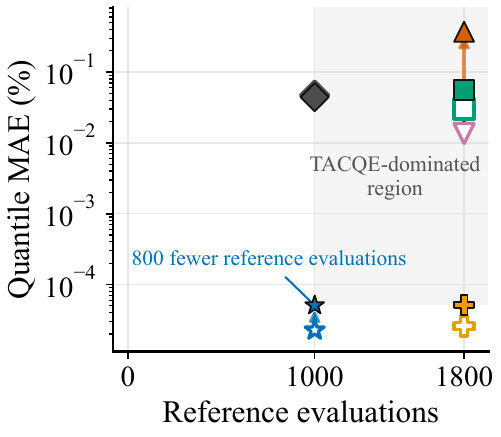}\hfill
		\includegraphics[width=.485\columnwidth]{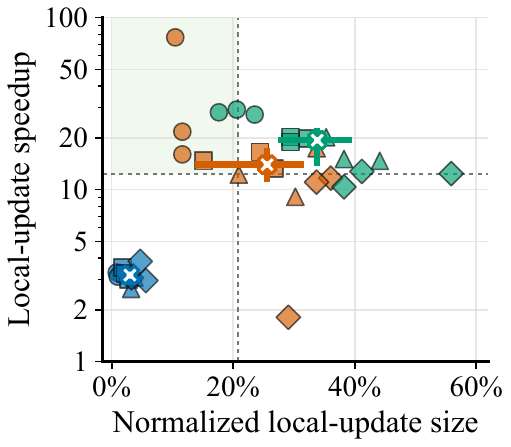}
		\caption{Accuracy and computational cost. (a) Reference-evaluation count versus quantile MAE. (b) Local-update speedup versus normalized update size across 36 topology variants. Crosses mark network medians. Timings in (b) cover the update routine only.}
		\label{fig:cost}
	\end{figure}

    Across 36 variants, the local update has median speedup 12.3 and range 1.81--76.8 (Fig.~\ref{fig:cost}(b)). Normalized update size is the number of edges in the local region divided by the combined edge counts before and after switching. For the first EuLV change, a 1.38\% update takes 77.6 ms versus 254 ms for full reconstruction, a speedup of approximately 3.27. Cached global-inverse operations remain costly even for small updates. EuLV has median update size 3.01\% and speedup 3.18, whereas CIGRE reaches speedup 14.0 at 25.6\%. Update size alone therefore does not explain cross-network acceleration. Compared with full reconstruction, all 36 variants have maximum absolute differences of $1.72\times10^{-9}$ in inverse entries and $1.37\times10^{-13}$ in state variables.

	The experiments distinguish numerical accuracy, model consistency, and statistical resolution. TACQE preserves fixed-topology point accuracy overall, although FP-$3\phi$ has smaller stress upper-tail error. Updating reduces error after topology changes. The results also show that numerical enclosure and reference expansion account for only about 0.231\% of fixed-topology probability-envelope width, and calibration uncertainty and endpoint sampling dominate. We also observe that reference demand depends on $n_c$ and fallback, while tail resolution depends mainly on $N$.

	\section{Conclusion}
	\label{sec:conclusion}
	
	TACQE integrates topology-updated power-flow equations, scenario-specific numerical state certification, and reference-based statistical calibration for maximum-VUF assessment under random power injections and topology variations. For each scenario, certification against the realized nonlinear power-flow equations provides a state enclosure and a corresponding interval for the maximum VUF. Independent reference-solver evaluations calibrate residual discrepancies between these intervals and the reference outputs. Calibrated endpoint exceedances, combined with a uniform DKW bound, yield simultaneous finite-sample bounds on reference upper-tail probabilities and the corresponding VUF quantiles over the selected threshold range.
    
    For fixed-topology cases, our simulation results show that TACQE achieves mean absolute errors of $2.26\times10^{-5}$ p.p. under nominal conditions and $5.06\times10^{-5}$ p.p. under stress conditions. Across topology-change experiments on three distribution networks, its quantile-function and upper-tail errors are smaller than those of four baselines by factors of 8.27--12.1 and 7.76--10.4, respectively. With equal reference-solver budgets, TACQE provides a median upper-tail probability-envelope width approximately 55\% smaller than that of Monte Carlo simulation. In addition, the local admittance-update routine achieves a median speedup of 12.3.
    
    We note that finite-sample uncertainty limit the resolution of extreme-quantile guarantees. With the current sampling budgets, the certified upper bound at the 99-th percentile remains infinite, despite accurate empirical point estimates. We therefore envision future work to examine sharper uniform tail bounds and recalibration procedures for changing operating distributions. We also suggest the investigation of the deployment of our approach, which will require treatment of network-parameter uncertainty, topology-identification uncertainty, and unresolved reference power-flow solutions.
	
	\bibliographystyle{IEEEtran}
	\bibliography{myref}

@article{Damianakis2025GridImpact,
  author  = {Damianakis, Nikolaos and Chandra Mouli, Gautham Ram and Bauer, Pavol},
  title   = {Grid Impact of Photovoltaics, Electric Vehicles and Heat Pumps on Distribution Grids---An Overview},
  journal = {Applied Energy},
  volume  = {380},
  pages   = {125000},
  year    = {2025},
  doi     = {10.1016/j.apenergy.2024.125000}
}

@article{Gruosso2024AdvancedVUF,
  author  = {Gruosso, Giambattista and Diaz Londono, Cesar and Daniel, Luca and Maffezzoni, Paolo},
  title   = {Advanced Probabilistic Load Flow Methodology for Voltage Unbalance Assessment in {PV} Penetrated Distribution Grids},
  journal = {International Journal of Electrical Power \& Energy Systems},
  volume  = {155},
  pages   = {109556},
  year    = {2024},
  doi     = {10.1016/j.ijepes.2023.109556}
}

@article{Tsobze2025VoltageImbalance,
  author  = {Tsobze, Saatong Kenfack and Mbasso, Wulfran Fendzi and Harrison, Ambe and Khishe, Mohammad and Dagal, Idriss and Jangir, Pradeep and Smerat, Aseel and Al-Gahtani, Saad F. and Elbarbary, Z. M. S.},
  title   = {A Decentralized Power Injection-Based Approach for Voltage Imbalance Mitigation in Three-Phase Distribution Networks},
  journal = {Scientific Reports},
  volume  = {15},
  number  = {1},
  pages   = {16162},
  year    = {2025},
  doi     = {10.1038/s41598-025-00328-4}
}

@inproceedings{Christianen2024ImportanceSampling,
  author    = {Christianen, Mark and Lam, Henry and Vlasiou, Maria and Zwart, Bert},
  title     = {Importance Sampling of Rare Events for Distribution Networks with Stochastic Loads},
  booktitle = {2024 Winter Simulation Conference (WSC)},
  pages     = {3590--3601},
  year      = {2024},
  doi       = {10.1109/WSC63780.2024.10838841}
}

@article{Martins2024KMedoids,
  author  = {Martins, Ant{\^o}nio Sobrinho Campolina and de Araujo, Leandro Ramos and Penido, D{\'e}bora Rosana Ribeiro},
  title   = {K-Medoids Clustering Applications for High-Dimensionality Multiphase Probabilistic Power Flow},
  journal = {International Journal of Electrical Power \& Energy Systems},
  volume  = {157},
  pages   = {109861},
  year    = {2024},
  doi     = {10.1016/j.ijepes.2024.109861}
}

@article{Liao2024ProbInterval,
  author  = {Liao, Xiaobing and Zhang, Yiming and Li, Zicheng and Wei, Hanqi and Ding, Hua},
  title   = {Probabilistic Interval Power Flow Calculation Method for Distribution Networks Considering the Correlation of Distributed Wind Power Output},
  journal = {International Journal of Electrical Power \& Energy Systems},
  volume  = {157},
  pages   = {109827},
  year    = {2024},
  doi     = {10.1016/j.ijepes.2024.109827}
}

@article{Lin2024LinDistS,
  author  = {Lin, Hanyang and Shen, Xinwei and Guo, Ye and Ding, Tao and Sun, Hongbin},
  title   = {A Linear {DistFlow} Model Considering Line Shunts for Fast Calculation and Voltage Control of Power Distribution Systems},
  journal = {Applied Energy},
  volume  = {357},
  pages   = {122467},
  year    = {2024},
  doi     = {10.1016/j.apenergy.2023.122467}
}

@article{Mahmoodi2024UnbalancedPF,
  author  = {Mahmoodi, Masoume and Noori R.A., S. Mahdi and Blackhall, Lachlan and Scott, Paul},
  title   = {A Comparison on Power Flow Models for Optimal Power Flow Studies in Integrated Medium-Low Voltage Unbalanced Distribution Systems},
  journal = {Sustainable Energy, Grids and Networks},
  volume  = {38},
  pages   = {101339},
  year    = {2024},
  doi     = {10.1016/j.segan.2024.101339}
}

@inproceedings{Chung2024TaylorRobustPF,
  author    = {Chung, Sungjoo and Zhang, Ying and Wang, Zhaoyu and Ding, Fei},
  title     = {Taylor-Expansion-Based Robust Power Flow in Unbalanced Distribution Systems---A Hybrid Data-Aided Method},
  booktitle = {2024 IEEE Power \& Energy Society General Meeting (PESGM)},
  pages     = {1--5},
  year      = {2024},
  doi       = {10.1109/PESGM51994.2024.10689162}
}

@article{Yang2024PGGNNPPF,
  author  = {Yang, Mei and Qiu, Gao and Liu, Tingjian and Liu, Junyong and Liu, Kai and Li, Yaping},
  title   = {Probabilistic Power Flow Based on Physics-Guided Graph Neural Networks},
  journal = {Electric Power Systems Research},
  volume  = {235},
  pages   = {110864},
  year    = {2024},
  doi     = {10.1016/j.epsr.2024.110864}
}

@article{Gao2024PhysicsGCN,
  author  = {Gao, Maosheng and Yu, Juan and Yang, Zhifang and Zhao, Junbo},
  title   = {Physics Embedded Graph Convolution Neural Network for Power Flow Calculation Considering Uncertain Injections and Topology},
  journal = {IEEE Transactions on Neural Networks and Learning Systems},
  volume  = {35},
  number  = {11},
  pages   = {15467--15478},
  year    = {2024},
  doi     = {10.1109/TNNLS.2023.3287028}
}

@article{Ghamizi2025PowerFlowMultiNet,
  author  = {Ghamizi, Salah and Cao, Jun and Ma, Aoxiang and Rodriguez, Pedro},
  title   = {{PowerFlowMultiNet}---Multigraph Neural Networks for Unbalanced Three-Phase Distribution Systems},
  journal = {IEEE Transactions on Power Systems},
  volume  = {40},
  number  = {1},
  pages   = {1148--1151},
  year    = {2025},
  doi     = {10.1109/TPWRS.2024.3465088}
}

@inproceedings{Glover2025MultiphaseGP,
  author    = {Glover, Daniel and Pareek, Parikshit and Deka, Deepjyoti and Dubey, Anamika},
  title     = {Power Flow Approximations for Multiphase Distribution Networks Using Gaussian Processes},
  booktitle = {2025 IEEE Power \& Energy Society General Meeting (PESGM)},
  pages     = {1--5},
  year      = {2025},
  doi       = {10.1109/PESGM52009.2025.11225305}
}

@article{Yang2024TopologyTransfer,
  author  = {Yang, Mei and Qiu, Gao and Liu, Junyong and Liu, Youbo and Liu, Tingjian and Tang, Zhiyuan and Ding, Lijie and Shui, Yue and Liu, Kai},
  title   = {Topology-Transferable Physics-Guided Graph Neural Network for Real-Time Optimal Power Flow},
  journal = {IEEE Transactions on Industrial Informatics},
  volume  = {20},
  number  = {9},
  pages   = {10857--10872},
  year    = {2024},
  doi     = {10.1109/TII.2024.3398058}
}

@article{Pareek2025PVE,
  author  = {Pareek, Parikshit and Misra, Sidhant and Deka, Deepjyoti},
  title   = {Data-Efficient Strategies for Probabilistic Voltage Envelopes under Network Contingencies},
  journal = {Sustainable Energy, Grids and Networks},
  volume  = {43},
  pages   = {101811},
  year    = {2025},
  doi     = {10.1016/j.segan.2025.101811}
}

@article{Tan2025RareRisk,
  author  = {Tan, Bendong and Zhao, Junbo and Chen, Yousu},
  title   = {Scalable Risk Assessment of Rare Events in Power Systems with Uncertain Wind Generation and Loads},
  journal = {IEEE Transactions on Power Systems},
  volume  = {40},
  number  = {2},
  pages   = {1374--1388},
  year    = {2025},
  doi     = {10.1109/TPWRS.2024.3435490}
}

@inproceedings{Niu2024VoltageRisk,
  author    = {Niu, Zhenyong and Wu, Renbo and Peng, Yiming},
  title     = {Voltage Risk Assessment of Low-Voltage Distribution Network Taking into Account the Correlation of {PV}},
  booktitle = {2024 IEEE 6th International Conference on Power, Intelligent Computing and Systems (ICPICS)},
  pages     = {363--367},
  year      = {2024},
  doi       = {10.1109/ICPICS62053.2024.10795891}
}

@article{Su2024DynamicBoundary,
  author  = {Su, Yu and Li, Dingrui and Wang, Fred and Olama, Mohammed M. and Ferrari, Maximiliano F. and Ollis, Ben and Liu, Yilu},
  title   = {Flexible Dynamic Boundary Microgrid Operation Considering Network and Load Unbalances},
  journal = {Applied Energy},
  volume  = {371},
  pages   = {123633},
  year    = {2024},
  doi     = {10.1016/j.apenergy.2024.123633}
}

@article{Gao2024BayesianPOPF,
  author  = {Gao, Fang and Xu, Zidong and Yin, Linfei},
  title   = {Bayesian Deep Neural Networks for Spatio-Temporal Probabilistic Optimal Power Flow with Multi-Source Renewable Energy},
  journal = {Applied Energy},
  volume  = {353},
  pages   = {122106},
  year    = {2024},
  doi     = {10.1016/j.apenergy.2023.122106}
}

@article{Wilks1941Tolerance,
  author  = {Wilks, Samuel S.},
  title   = {Determination of Sample Sizes for Setting Tolerance Limits},
  journal = {The Annals of Mathematical Statistics},
  volume  = {12},
  number  = {1},
  pages   = {91--96},
  year    = {1941},
  doi     = {10.1214/aoms/1177731788}
}

@article{Dvoretzky1956DKW,
  author  = {Dvoretzky, Aryeh and Kiefer, Jack and Wolfowitz, Jacob},
  title   = {Asymptotic Minimax Character of the Sample Distribution Function and of the Classical Multinomial Estimator},
  journal = {The Annals of Mathematical Statistics},
  volume  = {27},
  number  = {3},
  pages   = {642--669},
  year    = {1956},
  doi     = {10.1214/aoms/1177728174}
}

@article{Massart1990DKW,
  author  = {Massart, Pascal},
  title   = {The Tight Constant in the {Dvoretzky--Kiefer--Wolfowitz} Inequality},
  journal = {The Annals of Probability},
  volume  = {18},
  number  = {3},
  pages   = {1269--1283},
  year    = {1990},
  doi     = {10.1214/aop/1176990746}
}

@misc{IEEEPES2015EuLV,
  author       = {{IEEE PES AMPS DSAS Test Feeder Working Group}},
  title        = {European Low Voltage Test Feeder},
  year         = {2015},
  howpublished = {IEEE Power and Energy Society Test Feeder Repository}
}

@techreport{CIGRE2014Benchmark,
  author      = {Strunz, Kai and Abbasi, Ehsan and Fletcher, Robert and Hatziargyriou, Nikos and Iravani, Reza and Joos, Geza and Le-Huy, Hoang},
  title       = {Benchmark Systems for Network Integration of Renewable and Distributed Energy Resources},
  institution = {CIGRE},
  number      = {Technical Brochure 575},
  year        = {2014}
}

@phdthesis{Kerber2011LV,
  author = {Kerber, Georg},
  title  = {Aufnahmef{\"a}higkeit von Niederspannungsverteilnetzen f{\"u}r die Einspeisung aus Photovoltaikkleinanlagen},
  school = {Technical University of Munich},
  year   = {2011}
}

@inproceedings{Drucker1997SVR,
  author    = {Drucker, Harris and Burges, Christopher J. C. and Kaufman, Linda and Smola, Alex and Vapnik, Vladimir},
  title     = {Support Vector Regression Machines},
  booktitle = {Advances in Neural Information Processing Systems 9},
  pages     = {155--161},
  year      = {1997}
}

@book{Jolliffe2002PCA,
  author    = {Jolliffe, Ian T.},
  title     = {Principal Component Analysis},
  edition   = {2},
  series    = {Springer Series in Statistics},
  publisher = {Springer},
  address   = {New York, NY, USA},
  year      = {2002},
  doi       = {10.1007/b98835}
}

@book{Rasmussen2006GPML,
  author    = {Rasmussen, Carl Edward and Williams, Christopher K. I.},
  title     = {Gaussian Processes for Machine Learning},
  publisher = {MIT Press},
  address   = {Cambridge, MA, USA},
  year      = {2006},
  isbn      = {9780262182539}
}

@article{Liu2022DDAL3PF,
  author  = {Liu, Yitong and Li, Zhengshuo and Zhou, Yu},
  title   = {Data-Driven-Aided Linear Three-Phase Power Flow Model for Distribution Power Systems},
  journal = {IEEE Transactions on Power Systems},
  volume  = {37},
  number  = {4},
  pages   = {2783--2795},
  year    = {2022},
  doi     = {10.1109/TPWRS.2021.3130301}
}

@article{Nguyen2024FP3Phi,
  author  = {Nguyen, Hoang Tien and Kim, Young-Jin and Choi, Dae-Hyun},
  title   = {Sample-Efficient Learning for a Surrogate Model of Three-Phase Distribution System},
  journal = {IEEE Transactions on Power Systems},
  volume  = {39},
  number  = {1},
  pages   = {2361--2364},
  year    = {2024}
}
	
\end{document}